\newtheorem*{theorem-non}{Adversarial Strategy}
\begin{document}

\preprint{APS/123-QED}

\title{Controlled Quantum  Teleportation in the Presence of an Adversary}

\author{Sayan Gangopadhyay}
\email{sgangopa@uwaterloo.ca}
\affiliation{Institute for Quantum Computing, University of Waterloo, Waterloo, Canada}
\affiliation{ Department of Physics and Astronomy, University of Waterloo, Waterloo, Canada}
\author{Tiejun Wang}
\affiliation{School of Science, Beijing University of Posts and Telecommunications, Beijing 100876, China}
\author{Atefeh Mashatan}
\affiliation{Ted Rogers School of Information Technology Management, Toronto Metropolitan University, Toronto, Canada}
\author{Shohini Ghose}%
 \email{sghose@wlu.ca}
\affiliation{Institute for Quantum Computing, University of Waterloo, Waterloo, Canada}
\affiliation{ Department of Physics and Astronomy, University of Waterloo, Waterloo, Canada}
\affiliation{Department of Physics and Computer Science, Wilfrid Laurier University, Waterloo, Canada}
%

\date{\today}

\begin{abstract}
We present a device independent analysis of  controlled quantum teleportation where the receiver is not trusted. We show that the notion of genuine tripartite nonlocality allows us to certify control power in such a scenario. By considering a specific adversarial attack strategy on a device characterized by depolarizing noise, we find that control power is a monotonically increasing function of genuine tripartite nonlocality. These results are relevant for building practical quantum communication networks and also shed light on the role of nonlocality in multipartite quantum information processing.

\end{abstract}

\maketitle


\section{\label{intro}Introduction}
Controlled Quantum Teleportation (CQT) is a paradigmatic multipartite quantum communication protocol where a third party (or parties) determines the success or failure of quantum teleportation~\cite{karlsson1998quantum}.
In the standard quantum teleportation protocol \cite{PhysRevLett.70.1895}, an arbitrary qubit is teleported from Alice to Bob using maximally entangled EPR states. This idea has found applications in several quantum information schemes such as quantum gate teleportation \cite{gottesman1999demonstrating}, cluster state quantum computing \cite{raussendorf2001one} and quantum repeaters \cite{sangouard2011quantum}. Controlled quantum teleportation is an important multipartite extension of bipartitie quantum teleportation  \cite{karlsson1998quantum}.  CQT is an integral part of the design of Quantum Teleportation Networks (QTN) that could form the backbone of the future quantum internet \cite{castelvecchi2018quantum}.\par In the simplest CQT scheme, the success or failure of teleportation of a qubit from the sender to the receiver can be decided by a third party, also known as the controller, Charlie \cite{karlsson1998quantum}. This scheme works under the assumption that all parties are trusted, i.e., their devices and measurements are working exactly according to the specifications provided by the supplier. However, this is often not the case in real applications. There may be physical imperfections in the devices and measurement directions, which can not only lead to a lower teleportation fidelity but also can be used by an adversary to control the success or failure of teleportation. In scenarios where the precise knowledge of the underlying quantum state and measurement settings is not available, the framework of device-independent (DI) quantum information processing  can be useful \cite{pironio2009device}. \par
Traditionally, the notion of Bell nonlocality has been used for DI quantum information applications \cite{bellnonlocality}. Bell nonlocality refers to the phenomenon in which correlations obtained by performing local measurements on distant entangled states cannot be explained by any local hidden variable (LHV) model \cite{bell1964}. Bell nonlocality is often quantified by the violation of Bell inequalities. In applications such as DI Quantum Key Distribution (DIQKD) \cite{diqkd}, the violation of a Bell inequality strictly bounds the information leaked to an eavesdropper. Other notable applications of DI analysis include randomness amplification \cite{pironio2010random} and obtaining computational advantages \cite{anders2009computational}. What makes the DI formulation so powerful is the minimalism in assumptions regarding the functioning of the devices involved; it relies only on the observed measurement outcome correlations.\par
Recently, a stronger form of Bell's inequality has been used in the DI study of a multipartite quantum communication protocol called secret sharing \cite{DIsecret}. A secret  bit is split among $n-1$ parties in such a way that at least $m\leq n-1$  parties must collaborate to reveal it. Moreover, upto $n-2$ parties can be untrusted. It was shown that the maximal violation of Svetlichny's inequality \cite{svetlichny} guarantees that no information is leaked to an untrusted receiver. For non-maximal violation, it was observed that the larger the Svetlichny inequality violation, the smaller is the information leaked to the untrusted parties. 
 In \cite{DIsecret}, the secret message was a classical bit whereas a quantum secret sharing protocol can also share qubits. In the tripartite case, a quantum secret sharing (QSS) protocol becomes equivalent to a CQT protocol where Alice's secret qubit can be recovered by Bob iff Charlie assists \cite{pirandola2015advances}. This naturally leads one to ask what is the role and nature of nonlocality in the device independent (DI) analysis of QSS where the secret is quantum. A first step towards that goal would be to study DI tripartite QSS or equivalently, DI CQT. While DI quantum teleportation of a qubit was published close to a decade ago \cite{ho2013device}, a DI study of CQT is still an open question. \par
 In this work, we consider a CQT scenario where the receiver is untrusted. The receiver can collude with an external eavesdropper to increase the fidelity of teleportation beyond the classical limit even when the controller has not allowed it. Despite the allowed collusion and depending on the nature of the device used, it might still be impossible for Bob  to  achieve  a  teleportation  fidelity  as  high  as  that with Charlie’s participation.  Therefore, an untrusted receiver may lead to a decrease in control power but not entirely nullify it. We quantify the effective control power in this scenario and show that the maximal violation of the Svetlichny inequality guarantees maximum control power. We also show that the effective control power is non-zero for a significant range of the Svetlichny inequality violation where it is monotonically increasing. Finally, we emphasize the necessity of genuine tripartite nonlocality in the DI certification of the CQT scheme. We discuss the relevance and importance of these results in Section \ref{summary}.

\section{Background}
\subsection{Control Power in CQT} \label{cqt}
The controlled quantum teleportation protocol is similar to the standard quantum teleportation protocol with an added step. The objective is to teleport an unknown qubit from Alice to Bob only with the permission of the controller, Charlie. 
\begin{enumerate}
\item Let Alice, Bob and Charlie share a GHZ state of the form $(\ket{000}_{ABC}+\ket{111}_{ABC})/\sqrt{2}$. It can be equivalently expressed as $(\ket{\phi^+}_{AB}\ket{+}_{C}+\ket{\phi^-}_{AB}\ket{-}_C)/\sqrt{2}$. Henceforth, the subscripts $A,B,C$ will be dropped. 
\item Charlie performs a projective measurement in the $\sigma_X$ basis and gets an outcome $\gamma\in\{\pm 1\}$. His measurement operators are given by:
\begin{equation}M_C^\gamma=\frac{(\mathbb{I}_2+\gamma \sigma_X)}{2}; \ \ \ \ \ \gamma\in\{\pm 1\}.\end{equation}
After Charlie's measurement, Alice and Bob's joint state can be written as
\begin{align}
\rho_{\gamma}^{AB}&=tr^C(\rho_{GHZ}.(\mathbb{I}\otimes\mathbb{I}\otimes M_C^\gamma)) \\
&=\ket{\phi^{\gamma}}\bra{\phi^{\gamma}}.
\end{align}
\item Let the arbitrary state to be teleported be
\begin{equation}
    \rho^a=\frac{\mathbb{I}_2+\Vec{a}.\Vec{\sigma}}{2}. 
\end{equation}
Alice performs a Bell state measurement on the first two qubits of the state $\rho^a\otimes\rho^{AB}_{\gamma}$. The Bell state measurement is described by the following measurement operators: 
\begin{equation}\label{Bell}
M^A_{c_0c_1}=\ket{\phi^{c_0c_1}}\bra{\phi^{c_0c_1}},
\end{equation}
where $$\ket{\phi^{00}}=\frac{\ket{00}+\ket{11}}{\sqrt{2}}; \ket{\phi^{01}}=\frac{\ket{00}-\ket{11}}{\sqrt{2}};$$ \\ $$ \ket{\phi^{10}}=\frac{\ket{01}+\ket{10}}{\sqrt{2}}; \ket{\phi^{11}}=\frac{\ket{01}-\ket{10}}{\sqrt{2}}.$$
After Alice's measurement, Bob's state is projected into 
\begin{equation}
    \rho^B_{c_0c_1}=\frac{\mathbb{I}_2+(R_{c_0c_1\gamma}\Vec{a}).\Vec{\sigma}}{2},
\end{equation}
where 

\begin{equation}\label{rot}
    \begin{aligned}
    R_{0,0,+1}&=\mathbb{I}_3;R_{0,1,+1}=R_z(\pi);\\R_{1,0,+1}&=R_x(\pi);R_{1,1,+1}=R_y(\pi);\\R_{0,0,-1}&=R_z(\pi);R_{0,1,-1}=\mathbb{I}_3;\\R_{1,0,-1}&=R_y(\pi);R_{1,1,-1}=R_x(\pi).
    \end{aligned}
\end{equation}
$R_x(\pi), R_y(\pi), R_z(\pi)  $ represent rotations by $\pi$ along the three orthogonal directions.
\item Now Bob performs an $R_{c_0c_1\gamma}^{-1}$ rotation to retrieve $\rho^a$, 
\begin{equation}\label{finalwp}
\rho^B=\frac{\mathbb{I}_2+(R_{c_0c_1\gamma}^{-1}.R_{c_0c_1\gamma}\Vec{a}).\Vec{\sigma}}{2} =\frac{\mathbb{I}_2+\Vec{a}.\Vec{\sigma}}{2}=\rho^a. \end{equation}
\end{enumerate}
The average fidelity of teleportation between the unknown pure state ($\rho^a$) and Bob's final state ($\rho^{B}$) is calculated as \begin{equation}\label{fid}
    F=\int \frac{d\Vec{a}}{4\pi} \bra{a}\rho^B\ket{a}.
\end{equation} 
where the averaging has been performed over all pure qubit states.
Henceforth, we will refer to the average fidelity of teleportation performed with Charlie's participation as $F_C$ and that without Charlie's participation as $F_{NC}$.
From Eq. (\ref{finalwp}) and Eq. (\ref{fid}),
\begin{equation}
    F_{C}=\int\frac{d\Vec{a}}{4\pi} \bra{a}\frac{\mathbb{I}_2+\Vec{a}.\Vec{\sigma}}{2}\ket{a}=1.
\end{equation}
Suppose Charlie does not reveal $\gamma$. Bob will randomly perform either $R_{c_0c_1,+1}^{-1}$ or $R_{c_0c_1,-1}^{-1}$. In that case, the average teleportation fidelity is given by
\begin{widetext}
 \begin{equation}\label{FNCbs}
    F_{NC}=\int \frac{d\Vec{a}}{4\pi}\sum_{\gamma,\gamma'\in\{1,-1\}}\sum_{c_0,c_1\in\{0,1\}}P(\gamma,\gamma',c_0,c_1)\bra{a} \frac{\mathbb{I}_2+(R_{c_0c_1\gamma}^{-1}.R_{c_0c_1\gamma'}\Vec{a}).\Vec{\sigma}}{2}\ket{a}=\frac{2}{3}.
\end{equation}
\end{widetext}
The control power of Charlie is defined to be the difference between $F_C$ and $F_{NC}$ \cite{pract}. Without the controller's participation, the average teleportation fidelity should be minimized and with the controller's participation it should be maximized. Thus, the higher the difference between the above fidelities, the higher is the control. The control power of Charlie is therefore expressed as follows:
\begin{equation}
    CP=F_C-F_{NC}=1-\frac{2}{3}=\frac{1}{3}.
\end{equation}
\subsection{Device Independent Certification of Quantum Resources Used in Teleportation}\label{diqt}
In the usual two-party teleportation protocol using EPR states, Bancal et al. \cite{Ho} suggested a construction by which it is possible to device-independently certify whether a teleportation device is using quantum resources. Alice and Bob are given a pair of black boxes that supposedly perform the quantum teleportation of a qubit. The input of each box is a unit vector of the Bloch sphere. Alice's box takes the state to be teleported as the input given by the vector $\Vec{a}$ and outputs two bits $(c_0c_1) \in\{0,1\}^2$. In the back box scenario of teleportation, it is assumed that Alice knows the state to be teleported. Bob's box takes a vector $\Vec{b}$ as input and gives one bit output $\beta \in \{+1,-1\}$. $\Vec{b}$ represents a measurement of the teleported state in the $\Vec{b}.\Vec{\sigma}$ direction. The vendor claims that their boxes contain EPR states, using which the teleportation is performed on each run. Alice and Bob wish to verify the vendor's claim.  
    
    In \cite{Ho}, it was shown that it is indeed possible to certify teleportation in this black  box scenario. In other words, it is possible to infer a posteriori that the black boxes used quantum resources to perform the teleportation from the input/output statistics of the boxes.\par
    
    \par
    In a teleportation protocol, Alice is typically required to send $(c_0,c_1)$ to Bob such that he can perform the appropriate corrective rotation based on some pre-established agreement to prepare the unknown qubit. However, it was shown \cite{toner} that 2 classical bits of information are enough to simulate the statistical correlations of the maximally entangled singlet state. Hence, a black box certification of quantum resources is not possible if Alice communicates with Bob. Therefore, Alice and Bob do not reveal any of their measurement input/output until several (ideally infinite) rounds have been completed. Each round consists of the pair of boxes taking inputs $(\Vec{a},\Vec{b})$  and giving outputs $(c_0c_1,\beta)$. In the end, they can construct the probability distributions $P(c_0c_1,\beta|\Vec{a},\Vec{b})$. Alice and Bob have no knowledge of the inner working of the composite black boxes except that they are not communicating. It is also assumed that Alice and Bob have free will. At the end of the protocol, the only information they have is the  data table of inputs and outputs ($c_0c_1,\beta|\Vec{a},\Vec{b}$) for each round. Using this, they must test whether the source of correlations is quantum. This scenario can be easily mapped into a Bell scenario.  \par
    Let Alice choose from state settings $\{\Vec{a_1},\Vec{a_2}\}$ and Bob choose from measurement setting $\{\Vec{b_1},\Vec{b_2}\}$. Alice's black box gives two bits $(c_0,c_1)$ as the output, which must be mapped to one bit. This can be done by choosing $\alpha=2c_j-1$ if Alice's input is $a_j$. Thus, one can construct the distributions $P(\alpha,\beta|j,k)$ from $P(c_0c_1,\beta|\Vec{a},\Vec{b})$. Finally, the Clauser-Horne-Shimony-Holt (CHSH) Bell function can be calculated as \begin{equation}CHSH=\sum_{j,k\in\{0,1\}}P(\alpha=\beta|j,k)-P(\alpha\neq\beta|j,k).\end{equation}
    If $CHSH>2$, it is guaranteed that the pair of black boxes generate statistical correlations that are not possible to obtain using only classical resources \cite{bell1964}. In other words, if $CHSH>2$, it is certified that quantum resources are being used for teleportation.
    
\section{Device Independent Controlled Teleportation of a Qubit to an Untrusted Receiver}\label{dicqt}
    Here we consider the controlled teleportation scenario where a qubit in the possession of Alice (sender) is teleported to Bob (receiver) only when Charlie (controller) participates. However, we assume that Bob is not a trusted party. Bob can collude with external agents henceforth labelled Derek to extract extra information that can increase the fidelity of teleportation beyond the classical limit of $2/3$ even without Charlie's participation. The presence of an untrusted receiver does not necessarily imply that the controller has no impact on the teleportation fidelity. In spite of the allowed collusion and depending on the nature of the device used, it might still be impossible for Bob to achieve a teleportation fidelity as high as that with Charlie's participation. 
    In this section, we address the question of how to device independently certify control power in such a scenario.   \par 
    The goal of a CQT protocol is to ensure that Charlie can control whether quantum resources are being used for teleportation. Hence it is necessary to first certify that the given device is capable of using quantum resources for teleportation.  We will present a construction using which it is possible to certify whether quantum resources are being used for controlled teleportation to an untrusted receiver. 
   
The idea of certification of quantum resources used in black box teleportation described in Section \ref{diqt} can be adapted for black box \textit{controlled} teleportation. \par
 \subsection{Scenario}\label{scenario}
         Derek supplies three black boxes to Alice, Bob and Charlie that can allegedly perform controlled quantum teleportation of a qubit such that Charlie is the controller, Alice is the sender and Bob is the receiver. The ideal functions of the three black boxes are as follows:
        \begin{enumerate}
            \item Charlie's black box: Accepts a measurement setting $\Vec{c}\in S^2$ as input and upon measurement gives an outcome $\gamma\in\{+1,-1\}$. 
            \item Alice's black box: Accepts the state to be teleported $\Vec{a}\in S^2$ as input and upon Bell-measurement [Eq. (\ref{Bell})] gives an outcome $s_0s_1\in\{0,1\}^2$.
            \item Bob's black box: Performs a corrective rotation $R_{s_0s_1\vec{c}\gamma}^{-1}$ which is specified in the instruction manual of the device. Accepts a measurement setting $\Vec{b}\in S^2$ as input and upon measurement gives an outcome $\beta \in\{+1,-1\}$. 
        \end{enumerate}

    In an ideal CQT scheme, the shared state of Alice, Bob and Charlie's black boxes is the GHZ state $(\ket{000}+\ket{111})/\sqrt{2}$. The analysis of this ideal scheme has been shown in Section \ref{cqt}. The control power of Charlie for the ideal CQT scheme is $CP=\frac{1}{3}$.\par 
However, we do not assume anything apriori regarding the inner working of the black boxes - including the underlying composite state of Alice, Bob and Charlie and their individual measurement basis. Additionally, we do not trust the receiver Bob. Therefore, Alice and Charlie are \textit{trusted} and Bob is \textit{untrusted}.
 
\subsection{The Adversary}\label{adversary}
In the above scenario \ref{scenario}, let Bob be the \textit{untrusted} part and Derek be the \textit{eavesdropper}. Together, they play the role of an \textit{adversary}. \par
\subsubsection{Adversarial Goal}\par
The goal of the adversary is to maximize the average teleportation fidelity when Charlie has not allowed the teleportation, i.e., not revealed the input setting $\Vec{c}$ and the outcome $\gamma$ of his measurement. \par
\subsubsection{Adversarial Capabilities}\label{capabilities}
\begin{enumerate}
    \item The eavesdropper Derek is restricted to acting on individual signals separately. 
    \item Derek can generate the shared states of the black boxes, i.e., hold the common source of correlations but has no direct access to the input/output variables of the three parties.
    \item The untrusted party Bob, can correlate his inputs with that of the common source of correlations held by Derek. 
    \item Derek and Bob can jointly extract another outcome $\delta$ that can be potentially used to increase the fidelity of teleportation. This means that the corrective rotation performed by Bob can also depend on $\delta$. This extra capability of the adversary allows us to write Bob's corrective rotations as $R_{s_0s_1\vec{c}\gamma\delta}^{-1}$ in contrast to $R_{s_0s_1\vec{c}\gamma}^{-1}$ in the ideal scenario. 
\end{enumerate}
\subsection{Device Independent Test of Quantum Resources}
In order to account for the adversary, we consider here a modification of the ideal scenario of Section \ref{scenario}:
\subsubsection{Device Independent Test Scenario}\label{DIscenario}
\begin{enumerate}
 \item Derek is the manufacturer and supplier of the black boxes of Alice, Bob and Charlie.
 \item Alice, Bob and Charlie's black boxes cannot communicate with each other.

            \item Charlie's black box: Accepts a measurement setting out of two distinct choices $\Vec{c}\in \{\Vec{c_0},\Vec{c_1}\}$ as input and upon measurement gives an outcome $\gamma\in\{+1,-1\}$. 
            \item Alice's black box: Accepts the state to be teleported out of two distinct states $\Vec{a}\in \{\Vec{a_0},\Vec{a_1}\}$ as input and upon Bell-measurement [Eq. (\ref{Bell})] gives an outcome $s_0s_1\in\{0,1\}^2$.
            \item Bob's black box: Performs a corrective rotation $R_{\delta}^{-1}$ given to him by Derek. Accepts a measurement setting out of two distinct choices $\Vec{b'}\in \{\Vec{b'_0},\Vec{b'_1}\}$ as input and upon measurement gives an outcome $\beta \in\{+1,-1\}$. Equivalently, the corrective rotation can be included in the measurement of Bob such that the new measurement choices are $\Vec{b}=\{\Vec{b_0}=R_{\delta}^{-1}\Vec{b'_0}, \ \ \Vec{b_1}=R_{\delta}^{-1}\Vec{b'_1}\}$. 
            \item Alice and Charlie independently choose their individual measurement setting. Their choice of measurement setting depends only on their free-will. However, Bob's choice of measurement setting can be influenced by Derek. Therefore, the untrusted receiver does not have free-will.   
            \item  We require that the announcement of inputs and measurement outcomes be made simultaneously by all parties after several rounds of the experiment have been completed. This ensures that the inputs or outcomes of the trusted parties are not used to the advantage of the untrusted party.
            \item The announcements at the end of several rounds of sending in inputs and recording outcomes from each black box will be a table of $(s_0s_1,\beta,\gamma)$ given $(j,k,l)$ for each round where $j, k, l$ denote the input setting (0 or 1) of Alice, Bob and Charlie respectively. From this data, the joint probability distribution $p(s_0s_1,\beta,\gamma|j,k,l)$ can be computed.
        \end{enumerate}
For the purpose of device independent testing, we require that Alice's output  $s_0s_1$ be mapped into a single bit $\alpha$ in the following way:
$$\alpha=2s_j-1 \ , \ \  \ \ \text{where Alice's input is }\Vec{a_j}.$$ Using this map, one can construct the distributions $P(\alpha,\beta,\gamma|j,k,l)$ from $P(s_0s_1,\beta,\gamma|j,k,l)$.
We note that the choice of input merely indicates that each party can choose to press one out of two buttons. It is assumed that the parties do not know which measurement basis the buttons correspond to. 
\subsubsection{Causal Structure of the Device Independent Test Scenario}
The formalism of Directed Acyclic Graphs (DAG)  \cite{Chaves2017causalhierarchyof} will be used in later sections to characterize the correlations generated in the DI scenario of CQT with untrusted receiver. The causal structure of inputs, outcomes and common source of correlations of the different parties involved can be represented in the following way using directed acyclic graphs (DAG):
\begin{figure}[H]
        \centering
        \includegraphics[scale=0.5]{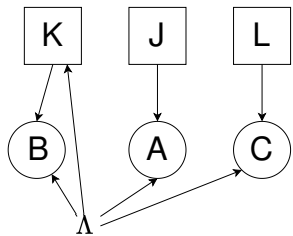}
        \caption{DAG representation of the causal model where $\Lambda$ is the common source of correlations supplied by Derek. In a classical model, $\Lambda$ is a shared random variable whereas in a quantum model, $\Lambda$ is a potentially entangled quantum state $\rho$. Here J, K, L denote the inputs of Alice, Bob and Charlie respectively. A, B, C denote the outcomes of Alice, Bob and Charlie respectively. Note that, since Bob is untrusted, he can correlate his inputs with that of the common source of correlations. }
        \label{DAG}
    \end{figure}
In Fig. \ref{DAG}, each directed edge represents a causal relation between two nodes. The start of each edge is called the parent node and the arrival of each edge is called the child node. In a classical causal model, the source of correlations and all other nodes are random variables such that the parent nodes completely characterize a child node. \par
The explanation of how to interpret the DAG in Fig. \ref{DAG}, is as follows:\\
\begin{itemize}
    \item The node $\Lambda$ represents the common source of correlations (the underlying state in the black boxes) supplied by the eavesdropper Derek. 
    \item J, K, L denote the inputs (measurement choices) of Alice, Bob and Charlie respectively. A, B, C denote the outcomes $\alpha,\beta,\gamma$ of Alice, Bob and Charlie respectively. 
    \item The arrows from $\Lambda$ to A, B and C represent the fact that the outputs of Alice, Bob and Charlie depend on the underlying state of their black boxes. 
    \item The arrows from J, K, L to A, B, C respectively represent the dependence of the outcomes of each party on the choice of their measurement setting.
    \item There are no arrows from the input or outcome of one party to another because their black boxes are not allowed to communicate with each other (Section \ref{DIscenario}).
    \item Since the inputs of the trusted parties Alice and Charlie can be independently chosen (measurement setting choice independence/free-will), J and L have no parent nodes. 
    \item Since Bob is untrusted, his input may be correlated with that of the common source of correlations. Hence, Bob does not have measurement independence. The arrow from $\Lambda$ to K represents the fact that Bob's choice of measurement setting depends on the information he receives from Derek through the black box. 
    \item Extra arrows from inputs/outcomes from one party to another were avoided due to the requirement that all announcements be made simultaneously at the end of several rounds.
\end{itemize}

If $\{v_i\}_{i=1}^n$ denote the nodes of the graph, $p(v_k|v_1,v_2,...,v_n)=p(v_k|pa(v_k))$ where $pa(v_k)$ denotes all parent nodes of $v_k$. Thus, the joint probability distribution of all the nodes is given by $p(v_1,v_2,...,v_n)=\Pi_{v_i}p(v_i|pa(v_i))$. \par

Therefore, in the causal structure with an untrusted receiver (Fig. \ref{DAG}), if $\Lambda$ is a classical random variable then the joint probabilities must admit the following decomposition:
    \begin{equation}\label{localdecomp}
    p(\alpha\beta\gamma|jkl)=\sum_\lambda p(\alpha|j\lambda)p(\beta|k\lambda)p(\gamma|l\lambda)\frac{p(k|\lambda)p(\lambda)}{p(k)}.
    \end{equation}

\subsubsection{Bell Inequality Characterizing the Device Independent Test Scenario}

In \cite{Chaves2017causalhierarchyof}, it was shown that the following DAGs are equivalent:\\
\\

\begin{figure}[H]
    \centering
    \includegraphics[scale=0.3]{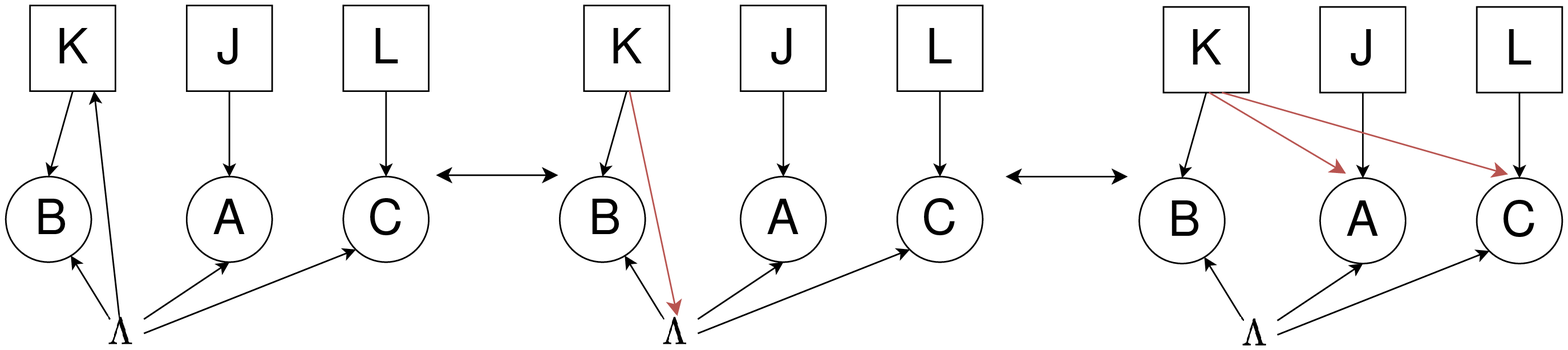}
    \caption{DAG Equivalence.}
    \label{dageq}
\end{figure}

This means that the  DAG representing the device independent scenario of controlled teleportation with an untrusted receiver (left-most DAG) is equivalent to the broadcasting scenario (right-most DAG). Broadcasting represents a scenario where one or more parties can openly communicate their choice of input to other parties. Here, in the right-most DAG it is seen that Bob communicates his choice of input to Alice and Charlie. \par
In \cite{bancal_2009}, the Bell inequalities characterizing the broadcasting scenario were studied. In the case of $n$ parties out of which  $n-m$ parties broadcast their inputs to all other parties and the remaining $m$ parties do not communicate their input to any other party, the tight bound to a specific Bell inequality called the Svetlichny inequality was found.

The Svetlichny function involves distributions of the form $p(\alpha\beta\gamma|jkl)$ which is obtained from the device independent test rounds.

\begin{equation}\label{S}
\begin{aligned}
S=& p_{A}(+1 \mid 0) \mathrm{CHSH}_{(+1)0}-p_{A}(-1 \mid 0) \mathrm{CHSH}_{(-1)0} \\
&+p_{A}(+1 \mid 1) \mathrm{CHSH}_{(+1)1}^{\prime}-p_{A}(-1 \mid 1) \mathrm{CHSH}_{(-1)1}^{\prime}.
\end{aligned}
\end{equation}
where $p_{A}(\alpha \mid j)$ is the marginal probability of Alice, $\mathrm{CHSH}_{\alpha j}$ and $\mathrm{CHSH}_{\alpha j}^{\prime}$ refer to symmetries of the Clauser-Horne-Shimony-Holt (CHSH) inequality given by $\mathrm{CHSH}_{\alpha j}=E_{(+1)0}^{\alpha j}+E_{(+1)1}^{\alpha j}+E_{(-1)0}^{\alpha j}-E_{(-1)1}^{\alpha j}$ and $\mathrm{CHSH}_{\alpha j}^{\prime}=E_{(+1)0}^{\alpha j}-E_{(+1)1}^{\alpha j}-E_{(-1)0}^{\alpha j}-$
$E_{(-1)1}^{\alpha j}$, and $E_{k l}^{\alpha j}=\sum_{\beta, \gamma=+1,-1}\beta\gamma  \ p(\beta, \gamma \mid j, \alpha, k, l)$ is the expectation value of the measurement outcome of Bob and Charlie conditioned on a given outcome $\alpha$ and the input $j$ of Alice.\par
The result from ref. \cite{bancal_2009} that we will use in this chapter is given by the following statement:

   \textbf{ For the broadcasting scenario where \  $n-m$ is odd,  \ \  \ 
    $|S|\leq2^{(n-m)/2+3/2}$. \\ Moreover, this bound is tight, i.e., there exists a classical strategy ($\Lambda$) such that $|S|=2^{(n-m)+3/2}$}.

In the broadcasting scenario of our interest (right-most DAG in Fig. \ref{dageq}), $n=3$ and $n-m=1$. Therefore, in this case, 
\begin{equation}\label{broadcasting}
   | S|\leq 4.
\end{equation}

Using this result [Eq. (\ref{broadcasting})] and the equivalence of DAGs (Fig. \ref{dageq}), we can immediately state the following:

\newtheorem{Theorem}{Theorem}
\begin{Theorem}\label{claim}
Any probability distribution $p(\alpha\beta\gamma|jkl)$ admitted by the DAG representing the device independent test scenario of controlled quantum teleportation with an untrusted receiver (as defined in Section \ref{DIscenario}) using classical strategies must satisfy the Svetlichny inequality $ | S|\leq 4$. Moreover, this bound is tight.
\end{Theorem}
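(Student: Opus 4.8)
The plan is to derive the statement as an immediate corollary of the two external results already quoted in this section: the DAG equivalence of Ref.~\cite{Chaves2017causalhierarchyof} (Fig.~\ref{dageq}) and the tight broadcasting bound on the Svetlichny function of Ref.~\cite{bancal_2009} (Eq.~\ref{broadcasting}). First I would establish that every distribution $p(\alpha\beta\gamma|jkl)$ produced in the device independent test scenario of Section~\ref{DIscenario} with a classical source $\Lambda$ admits the factorization of Eq.~\ref{localdecomp}. This is obtained by reading the Markov structure off the DAG of Fig.~\ref{DAG}: $J$ and $L$ are root nodes, $A$ has parents $\{J,\Lambda\}$, $B$ has parents $\{K,\Lambda\}$, $C$ has parents $\{L,\Lambda\}$, and the untrusted party's input $K$ has parent $\Lambda$; applying $p(v_1,\dots,v_n)=\prod_i p(v_i\mid pa(v_i))$ and then conditioning on the inputs yields exactly Eq.~\ref{localdecomp}. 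I would also note that the coarse-graining $\alpha = 2s_j-1$ of Alice's two-bit output into a single bit is a local classical post-processing, hence preserves this decomposition, so it suffices to work with the one-bit distribution $p(\alpha\beta\gamma|jkl)$.

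Next I would invoke the equivalence of Fig.~\ref{dageq}: the left-most DAG --- which is precisely the causal structure of our test scenario, whose only edge beyond an ordinary tripartite Bell scenario is $\Lambda\to K$ --- generates the same correlation set as the right-most DAG, the broadcasting scenario in which Bob announces his input $K$ to both Alice and Charlie while Alice and Charlie still do not communicate with each other or with Bob. This places us in the setting of Ref.~\cite{bancal_2009} with $n=3$ parties, $m=2$ non-communicating parties (Alice and Charlie), and $n-m=1$ broadcasting party (Bob). Since $n-m=1$ is odd, the quoted tight bound gives $|S|\le 2^{(n-m)/2+3/2}=2^{2}=4$, i.e. Eq.~\ref{broadcasting}, where $S$ is the Svetlichny function written in terms of the test-round distribution as in Eq.~\ref{S}. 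This proves the inequality.

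For tightness I would exhibit a classical strategy saturating $|S|=4$. Existence is already part of the Bancal et al. statement, but I would make it concrete by letting $\Lambda$ carry a shared random bit that, together with Bob's broadcast input $k$, is used by Alice and Charlie to deterministically reproduce a Svetlichny-type box on the remaining bipartite CHSH and CHSH$'$ blocks, so that each conditional $\mathrm{CHSH}_{\alpha j}$ (resp. $\mathrm{CHSH}^{\prime}_{\alpha j}$) attains its maximal value compatible with the broadcast of $k$; summing the four Alice-marginal-weighted CHSH terms of Eq.~\ref{S} then yields $|S|=4$. Translating such a strategy back through the DAG equivalence of Fig.~\ref{dageq} gives an explicit classical $\Lambda$ admitted by the original DI test scenario that achieves the bound.

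The main obstacle I anticipate is not any hard calculation but the bookkeeping of this reduction. One must check carefully that the causal structure of Section~\ref{DIscenario} --- including the requirement of simultaneous announcements, which is exactly what forbids any input-to-outcome arrows between the parties --- is \emph{literally} the DAG to which the Chaves et al. equivalence applies, and that the parameter identification $(n,m)=(3,2)$ is the correct one, in particular that it is Bob, the single untrusted party lacking measurement independence, who plays the role of the broadcasting party (and not Alice or Charlie). Once that identification is pinned down, both the inequality $|S|\le 4$ and its tightness follow directly from the two cited theorems.
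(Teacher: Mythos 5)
Your proposal is correct and follows essentially the same route as the paper: the paper likewise obtains Theorem~\ref{claim} as an immediate consequence of the DAG equivalence of Fig.~\ref{dageq} (Ref.~\cite{Chaves2017causalhierarchyof}) together with the tight broadcasting bound of Ref.~\cite{bancal_2009} applied with $n=3$, $n-m=1$, giving $|S|\le 2^{(n-m)/2+3/2}=4$ with tightness inherited from the cited result. Your added bookkeeping (the explicit factorization of Eq.~\ref{localdecomp}, the check that the coarse-graining $\alpha=2s_j-1$ is local post-processing, and the identification of Bob as the broadcasting party) only makes explicit what the paper leaves implicit.
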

\textbf{Condition for certification of quantum resources:} If the Svetlichny inequality  $ | S|\leq 4$ is violated, it is guaranteed that the observed probability distribution $p(\alpha\beta\gamma|jkl)$ was not generated entirely by classical means. Hence, the source of correlations ($\Lambda$) must be quantum.   

\subsubsection{Comparison with the Device Independent Scenario of Fully Trusted Controlled Quantum Teleportation}
In controlled quantum teleportation where all parties are trusted, it is easy to see that the device independent test scenario is represented by the following DAG:
\begin{figure}[H]
    \centering
    \includegraphics[scale=0.5]{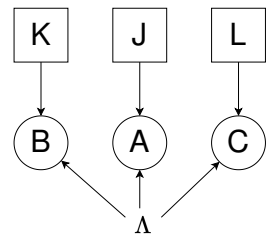}
    \caption{DAG for the device independent test scenario of fully trusted controlled quantum teleportation. Note that there are no arrows from $\Lambda$ to $K$ unlike that in the scenario of untrusted receiver considered earlier. Here, all parties have measurement choice independence and there is no communication between each other. }
    \label{dagtrusted}
\end{figure}
This device independent scenario is certified to be quantum if the obtained correlations violate Mermin's inequality \cite{mermin}. A maximal violation of Mermin's inequality implies that the underlying state is the GHZ state, therefore guaranteeing maximum control power. However, it was shown in Ref. \cite{DIsecret} that there exists a local model in the broadcasting scenario of Fig. \ref{dageq}, that violates Mermin's inequality maximally. Since the device independent scenario of controlled teleportation with an untrusted receiver is equivalent to the broadcasting scenario (Fig. \ref{dageq}), Mermin's inequality is not sufficient for its certification. \par
The presence of an untrusted part makes it necessary to use a stronger notion of nonlocality also known as genuine tripartite nonlocality/Svetlichny nonlocality for the certification of controlled teleportation.

\subsubsection{When is the Controller's Authority Maximum?}

\begin{Theorem}\label{max authority}
Controller's authority is maximum with $CP=\frac{1}{3}$ in the controlled quantum teleportation scheme if $S=4\sqrt{2}$ is obtained from the device independent test.   
\end{Theorem}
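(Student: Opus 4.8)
The plan is to show that the hypothesis $S=4\sqrt{2}$ collapses the black boxes onto the ideal configuration of Section~\ref{cqt}, and then to read off $F_C=1$ and $F_{NC}=2/3$. I would start from the fact that $4\sqrt{2}$ is the maximal value of the Svetlichny functional $S$ of Eq.~(\ref{S}) attainable by \emph{any} quantum strategy with non-communicating boxes: it is a Tsirelson-type bound sitting strictly between the classical bound $|S|\le 4$ of Theorem~\ref{claim} and the algebraic bound $|S|\le 8$. Hence $S=4\sqrt{2}$ is an extremal quantum point, and by the rigidity (self-testing) of the maximal Svetlichny violation it certifies that the source $\Lambda=\rho$ shared by the three boxes is, up to a local isometry applied by each party, the GHZ state $(\ket{000}+\ket{111})/\sqrt{2}$, and that Charlie's two inputs and Alice's two inputs are (isometric images of) the $\sigma_X$ measurement and the Bell measurement used in Section~\ref{cqt}.

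Second, I would invoke purity of $|\mathrm{GHZ}\rangle_{ABC}$ to neutralise the adversary's extra capabilities from Section~\ref{capabilities}. Because the certified state on $ABC$ is pure, the self-testing conclusion forces the global state to factor as $|\mathrm{GHZ}\rangle_{ABC}\otimes|\mathrm{junk}\rangle$, with the ``junk'' register — which is all Derek can retain — in a product state with $ABC$. Consequently the side outcome $\delta$ jointly extracted by Bob and Derek is statistically independent of $\alpha,\beta,\gamma$, and the arrow $\Lambda\to K$ of Fig.~\ref{DAG} conveys no information about the GHZ factor. So, conditioned on $S=4\sqrt{2}$, the adversary is effectively reduced to holding Bob's qubit of an untouched GHZ state plus the public classical data.

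Third, I would evaluate the two fidelities directly. When Charlie participates he announces $\vec c$ and $\gamma$; by the certified form of his measurement this is exactly step~2 of Section~\ref{cqt}, after which the protocol runs verbatim and $F_C=1$. When Charlie does not participate, the only resource usable to teleport $\rho^a$ from Alice to Bob is the marginal $\mathrm{tr}_C\,|\mathrm{GHZ}\rangle\langle\mathrm{GHZ}| = \tfrac12\big(\ket{\phi^+}\bra{\phi^+}+\ket{\phi^-}\bra{\phi^-}\big)=\tfrac12\big(\ket{00}\bra{00}+\ket{11}\bra{11}\big)$, which is separable. No teleportation protocol built on a separable Alice--Bob state (assisted by LOCC) can exceed the classical fidelity $2/3$, since its fully entangled fraction is at most $1/2$; this is precisely the average already computed in Eq.~(\ref{FNCbs}). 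As Bob can always reach $2/3$ by a measure-and-prepare strategy, $F_{NC}=2/3$ exactly, and therefore $CP=F_C-F_{NC}=1-\tfrac23=\tfrac13$, the value of the ideal scheme, which is also the largest $CP$ can be. This proves the theorem.

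The main obstacle is the rigidity step in the first paragraph: one must establish (or cite a robust version of) the claim that $S=4\sqrt{2}$ self-tests the GHZ state together with the relevant Pauli/Bell measurements, and in particular that it forces Derek's register to decouple — it is this decoupling, not the numerical value of $S$ on its own, that defeats the adversarial capabilities listed in Section~\ref{capabilities}. If only an approximate self-testing statement is available, the conclusion would weaken to $CP$ lying within $O\!\big(\sqrt{4\sqrt{2}-S}\big)$ of $1/3$; for the exact hypothesis $S=4\sqrt{2}$ the exact self-testing statement gives the exact equality $CP=1/3$.
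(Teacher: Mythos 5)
Your proposal is correct and follows essentially the same route as the paper: the paper's proof likewise argues (in its ``alternative'' formulation) that $S=4\sqrt{2}$ is attained only by the maximally entangled GHZ state up to local unitaries, that purity of this state forces Derek's system into a product with $ABC$ so that $\delta$ carries no information, and then reads off $F_C=1$ and $F_{NC}=2/3$ from Section~\ref{cqt}. The only differences are refinements on your side — you explicitly flag the self-testing/rigidity step as the point needing justification (the paper asserts it without citation, though it also offers a second argument via maximal conditional CHSH violation plus monogamy of entanglement), and you upgrade $F_{NC}=2/3$ from the paper's computation for a fixed correction strategy to an optimality statement via the fully entangled fraction of the separable $AB$ marginal.
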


\begin{proof}
Because of the normalization of $p_{A}(a \mid j)$, to achieve the maximum quantum violation $S=4 \sqrt{2}$ necessarily, we must have $\mathrm{CHSH}_{(-1)0}=\mathrm{CHSH}_{(-1)1}^{\prime}=-2 \sqrt{2}$
and $\mathrm{CHSH}_{(+1)0}=\mathrm{CHSH}_{(+1)1}^{\prime}=2 \sqrt{2}$.  
For any given $\alpha$ and $j$, Charlie and Bob's joint state must maximally violate the $\mathrm{CHSH}$. Only maximally entangled two-qubit pure states can generate such a correlation. \par
Thus, by the monogamy of entanglement, $\rho^{BCD}=\rho^{BC}\otimes\rho^{D}$ where $\rho^{BC}$ is a maximally entangled pure state of Bob and Charlie. This means that if $S=4\sqrt{2}$, then Derek cannot remain entangled to Bob and Charlie's devices.\par
Now suppose the same device which yielded $S=4\sqrt{2}$ during the device independent test, is used for the standard controlled quantum teleportation scheme (Section \ref{cqt}).   If Charlie does not give permission to teleport i.e., does not reveal $\gamma l$ then $p(\delta=\gamma l)=\frac{1}{4}$. This reflects the fact that Derek's state is separate from Bob and Charlie's composite state and therefore cannot obtain information about their systems by performing local operations on his system.  
Essentially, it means that Bob has no extra information about what the corrective rotations $R_{s_0s_1\gamma l}^{-1}$ are beyond what the trusted parties Alice and Charlie choose to reveal. \par
Alternatively, we note that only the maximally entangled GHZ state $\frac{\ket{000}+\ket{111}}{2}$ (upto local unitary operations) violates the Svetlichny inequality maximally, i.e., gives $S=4\sqrt{2}$. It directly follows that Derek's state must be separated from the tripartite state of Alice, Bob and Charlie since the maximally entangled GHZ state is a pure quantum state. The controller's authority for the GHZ state was derived in Section \ref{cqt} and  CP was shown to be  1/3. 
\end{proof}

\section{Controller's Authority from Non-maximal Violation of Svetlichny Inequality}\label{non-maximal}
In Section \ref{diqt}, we proposed a setup using which the Svetlichny function can be computed. We showed that a maximal violation of the Svetlichny Inequality [Eq. (\ref{broadcasting})] implies that Bob can get no useful information from Derek to increase the average teleportation fidelity without Charlie's permission. This ensures that the controller's authority is maximized. However, one can expect that the controller's authority will not be maximized in case of non-maximal violation of the Svetlichny Inequality. \par
In this section, we study the quantitative relationship between genuine 3-way nonlocality and the controller's authority using the example of two specific families of quantum states. We consider two noise models of the tripartite GHZ state - the total depolarizing channel and the qubit depolarizing channel - and compute the controller's authority as a function of Svetlichny Inequality [Eq. (\ref{broadcasting})] violation. We consider a specific attack strategy using which Bob can try to increase the average fidelity of teleportation without Charlie's permission. This strategy is based on the assumption that Alice, Bob, Charlie and Derek are restricted to performing quantum operations on their respective systems - no joint quantum operations allowed between any two parties. Charlie's authority will be quantified by the Effective Control Power (ECP) which we define as
\begin{equation}
    ECP=F_C^{NE}-F_{NC}^{E}.
\end{equation} Here $F_C^{NE}$ indicates the average fidelity of teleportation with Charlie's permission and no eavesdropping. $F_{NC}^E$ indicates the average fidelity of teleportation without Charlie's permission but with eavesdropping (i.e., Derek's participation). The general definition of controller's authority would be
\begin{equation}\label{CP}
CP=F_C^{E}-F_{NC}^{E}.
\end{equation}The superscripts ($E$) indicate that eavesdropping has been considered in both cases of control (with and without Charlie's permission). We shall leave the more general quantification of controller's authority as future work and focus on Effective Control Power $(ECP)$ of Charlie in this section. Note that $F_{C}^{E}\geq F_{C}^{NE}$ because eavesdropping will always increase fidelity. Therefore, $ECP$ is a lower bound to the more general definition of controller's authority quantified by [Eq. (\ref{CP})]. It is guaranteed that for the adversarial strategy under consideration, Charlie has at least $ECP$ amount of control. \par
\subsection{Adversarial Strategy}\label{derek strategy}
The ideal CQT scheme (Section \ref{cqt}) using the tripartite  GHZ state allows teleportation with perfect average fidelity with Charlie's permission. Alice and Bob's composite state is projected into one of the EPR states after Charlie performs a measurement and reveals the measurement setting and outcome to Bob. When Charlie does not make his measurement information known to Bob, i.e., does not give permission to teleport, Bob does not know which EPR state the teleportation channel has been projected into. This prevents Bob from achieving an average teleportation fidelity higher than $2/3$. Recall, that the goal of the adversary (Section \ref{adversary}) is to maximize the average teleportation fidelity when Charlie has not allowed the teleportation. Bob's  
next best option is to ask Derek to perform a measurement in a suitable basis and reveal its outcome ($\delta$) to Bob such that Alice and Bob's composite state is projected close to one of the four EPR states. This gives Bob some relevant information regarding the corrective rotations, using which he can recover the state to be teleported with higher fidelity (see point 4 of Adversarial Capabilities, Section \ref{capabilities}). Our goal is to find the maximum fidelity of teleportation that Bob can achieve without permission from Charlie using this extra information ($\delta$). This will allow us to evaluate the worst case Effective Control Power (ECP) of Charlie.\par

The adversarial strategy can thus be formulated as the following problem:

\begin{theorem-non}\label{strategy}
 Find the optimal POVM operators of Derek such that the fidelity between the bipartite post-measurement state of Alice and Bob and one of the EPR states is maximized. 
\end{theorem-non}
\subsubsection{Execution of Adversarial Strategy}
The adversarial strategy in Proposition \ref{strategy} can be executed in the following way:\\

Let $\{M_i\}$ denote Derek's POVMs, and $Pr(i)$ denote the probability of getting an outcome $i$. $\rho^{ABD}$ is the joint state of Alice, Bob and Derek. Note that Charlie's state is irrelevant in this case since he is not participating. \par 
Then the post measurement state of Alice and Bob is given by \begin{equation}
    \rho^{AB}_i=\frac{\Tr^D\left(\rho^{ABD}(\mathbb{I}\otimes\mathbb{I}\otimes M_i)\right)}{Pr(i)}.
    \end{equation}
Let $F(\rho_1,\rho_2)$ denote the fidelity between two density matrices $\rho_1$ and $\rho_2$. Then the problem of finding the optimal POVMs can be formulated as follows:
\begin{equation}\label{problem}
\begin{aligned}
&\text{Maximize} \ \ \ \ \sum_{i}Pr(i) F(\rho^{AB}_i,\ket{\phi_i}\bra{\phi_i})\\
&\text{Subject To}    \ \ \ \ M_i\geq 0, \ \sum_{i}M_i=\mathbb{I}
\end{aligned}
\end{equation}
\begin{equation}
\begin{aligned}
    &Pr(i)F(\rho^{AB}_i,\ket{\phi_i}\bra{\phi_i})\\&=\Tr(\Tr^D\left(\rho^{ABD}(\mathbb{I}\otimes\mathbb{I}\otimes M_i)\right)\ket{\phi_i}\bra{\phi_i})\\
   &=\Tr((\ket{\phi_i}\bra{\phi_i}\otimes\mathbb{I})\rho^{ABD}(\mathbb{I}\otimes\mathbb{I}\otimes M_i))\\
       &=\Tr(\Tr^D((\ket{\phi_i}\bra{\phi_i}\otimes\mathbb{I})\rho^{ABD})M_i).
    \end{aligned}
\end{equation}
Let $\tilde{\rho}_i$= $\Tr^D((\ket{\phi_i}\bra{\phi_i}\otimes\mathbb{I})\rho^{ABD})$. After substituting $\tilde{\rho}_i$ in Eq. (\ref{problem}), we get:
\begin{equation}\label{sdp}
\begin{aligned}
&\text{Maximize} \ \ \ \ \sum_{i}\Tr(\tilde{\rho}_i M_i)\\
&\text{Subject To}    \ \ \ \ M_i\geq 0, \ \sum_{i}M_i=\mathbb{I}, M_i=M_i^{\dagger}\\
&\text{Variable} \ \ \ \ \ M_i
\end{aligned}
\end{equation}

The optimization problem in Eq. (\ref{sdp}) can be cast into a semidefinite program (SDP) using the procedure given in Appendix \ref{AppendixA}. The SDP can be solved numerically using the \texttt{CVX} module \cite{cvx}, \cite{gb08} on MATLAB to obtain the optimal POVM measurements.

\subsection{Examples of Controller's Authority with Non-Maximal \\Svetlichny Inequality Violation}\label{examples}

A useful approach to analyze the adversarial strategy and Derek's role is to describe the problem using the framework of decoherence and noise models in open systems.  
The ideal CQT scheme assumes that the tripartite state shared by Alice, Bob and Charlie is a pure GHZ state ($\ket{\psi}=(\ket{000}+\ket{111})/\sqrt{2}$). More generally, in a realistic scenario, the pure GHZ state would undergo decoherence due to entanglement with the environment. This environment could be the adversary Derek, and hence the decoherence model can be used to analyze Derek's effect on CQT as follows.
The state $\rho_f$ after a decoherence process $\varepsilon$ is given by $\rho_f=\varepsilon\rho_i$. The action of $\varepsilon$ can be described using Kraus operators $\{E_{j}\}_j$ in the following way \cite{kraus1983states},\cite{zhang2013speed}:
 \begin{equation}
 \varepsilon \rho_i=\sum_{j=1}^{M} E_{j} \rho_i E_{j}^{\dagger}.
 \end{equation}

 First, we will consider the case where the whole state is affected by the same decoherence process described by the total depolarizing channel. It is a process in which the ideal GHZ state is mixed with white noise with probability $p$. Thus, \begin{equation}
     \rho_f^{total}=p\frac{\mathbb{I}}{8}+(1-p)\ket{\psi}\bra{\psi}.
 \end{equation} 
Next, we will consider the decoherence process in which each qubit of the tripartite GHZ state undergoes a depolarizing channel. Note that the qubit depolarizing process is physically more appropriate than the total depolarizing process because the three qubits are distributed to three different parties who are usually at different locations. Each qubit is coupled to its local environment and undergoes depolarizing independently though they were initially entangled at a single location. The qubit depolarizing channel is described by the Kraus operators $E_{0}=\sqrt{1-p^{\prime}} I ; \quad E_{i}=\sqrt{\frac{p^{\prime}}{3}} \sigma_{i}$ where $p^{\prime}=\frac{3p}{2}$ \cite{zhang2013speed}.   The final state after the qubit depolarizing process assuming the same depolarizing parameter for each quibit, is given by
 \begin{equation}
 \rho^{qubit}_f=\sum_{i j k} E_{i} \otimes E_{j} \otimes E_{k}\ket{\psi}\bra{\psi} \left[E_{i} \otimes E_{j}\otimes E_{k}\right]^{\dagger}.
 \end{equation}

 $\rho_f^{total},\rho^{qubit}_f$ are mixed states unless $p=0$. One can think of these mixed states as a part of a bigger pure quantum state $\ket{\psi^{ABCD}}$ comprising Alice, Bob, Charlie and Derek such that $\Tr^D(\ket{\psi^{ABCD}}\bra{\psi^{ABCD}})=\rho^{qubit/total}_f$. In the language of quantum cryptography, one would say Derek `holds' a purification of the mixed tripartite state. Since $\ket{\psi^{ABCD}}$ is a pure state, Derek is the most general eavesdropper. Any larger quantum system $\rho^{ABCD\Delta}$ with more eavesdroppers $\Delta_1,\Delta_2,\ldots,\Delta_N$ will be of the separable form $\rho^{ABCD\Delta}=\ket{\psi^{ABCD}}\bra{\psi^{ABCD}}\otimes \rho^{\Delta_1,\Delta_2,\ldots,\Delta_N}$, which means that $\Delta_1,\Delta_2,\ldots,\Delta_N$ cannot be correlated with $\ket{\psi^{ABCD}}$.
 \par
 In this paper, we will use the spectral decomposition  of a density matrix to obtain a purification of $\ket{\psi^{ABCD}}$ \cite{nie2011quantum} from $\rho_f^{total/qubit}$. 
 The purification $\ket{\psi_f^{total/qubit}}$ of $\rho_f^{total/qubit}$ is therefore given by
 \begin{equation}\label{purification}
     \ket{\psi_f^{total/qubit}}=\sum_{k\in \{1,2,\ldots,8\}}\frac{(\rho_f^{total/qubit}\otimes \mathbb{I}_8) \ket{\Lambda_k}\ket{\Lambda_k}}{\sqrt{\Tr(\ket{\Lambda_k}\bra{\Lambda_k}\rho_f^{total/qubit})}}.
 \end{equation}
 \subsubsection{Computing the Effective Control Power (ECP) for the Depolarized GHZ State}\label{ECP calculation}
We have all the necessary ingredients to compute $F_C^{NE }$ and $F_{NC}^{E}$ (defined in Section \ref{non-maximal}) for the total depolarized and qubit depolarized GHZ states.\par
Let $\rho^B_{s_0s_1\gamma l}$ denote Bob's state when Alice and Charlie's boxes have revealed the outputs  $s_0s_1$ and $\gamma l$ after performing the measurements $\{M_{s_0s_1}^{aA}\}_{s_0s_1}$ and $\{M_{\gamma l}^C\}_{\gamma l}$ respectively. 
$$\rho_{s_0s_1\gamma l}^B=\frac{\Tr^{aAC}\left(\left(\frac{\mathbb{I}+\Vec{a}.\Vec{\sigma}}{2}\otimes \rho_f^{total/qubit}\right)( M_{s_0s_1}^{aA} \otimes\mathbb{I}\otimes M_{\gamma l}^C )\right)}{\Tr\left(\left(\frac{\mathbb{I}+\Vec{a}.\Vec{\sigma}}{2}\otimes \rho_f^{total/qubit}\right)( M_{s_0s_1}^{aA} \otimes\mathbb{I}\otimes M_{\gamma l}^C )\right)}.$$
Then the average fidelity of teleportation with Charlie's permission is given by:
\begin{widetext}
 \begin{equation}
    F_C^{NE }=\int\frac{d\Vec{a}}{4\pi}\sum_{s_0,s_1,l \ \in \ \{0,1\}; \ \gamma \in \ \{\pm 1\}}P( s_0s_1 \gamma l)\bra{a} R^{-1}_{s_0s_1\gamma l}\rho^B_{s_0s_1\gamma l}(R^{-1}_{s_0s_1\gamma l})^{\dagger}\ket{a}.
\end{equation}
\end{widetext}

The calculation of $F_C^{NE }$ has been done assuming that the set of corrective rotations $\{R^{-1}_{s_0s_1\gamma l}\}_{s_0s_1\gamma l}$ of Bob are those that are required to exactly recover the qubit to be teleported, had the shared tripartite state ($\rho_f^{total/qubit}$) been the perfect GHZ state. In this paper, the set of rotation matrices have been assumed to be fixed. We have not considered variable rotations conditioned on the underlying tripartite state.
\begin{figure*}
    \centering
    \includegraphics[scale=0.58]{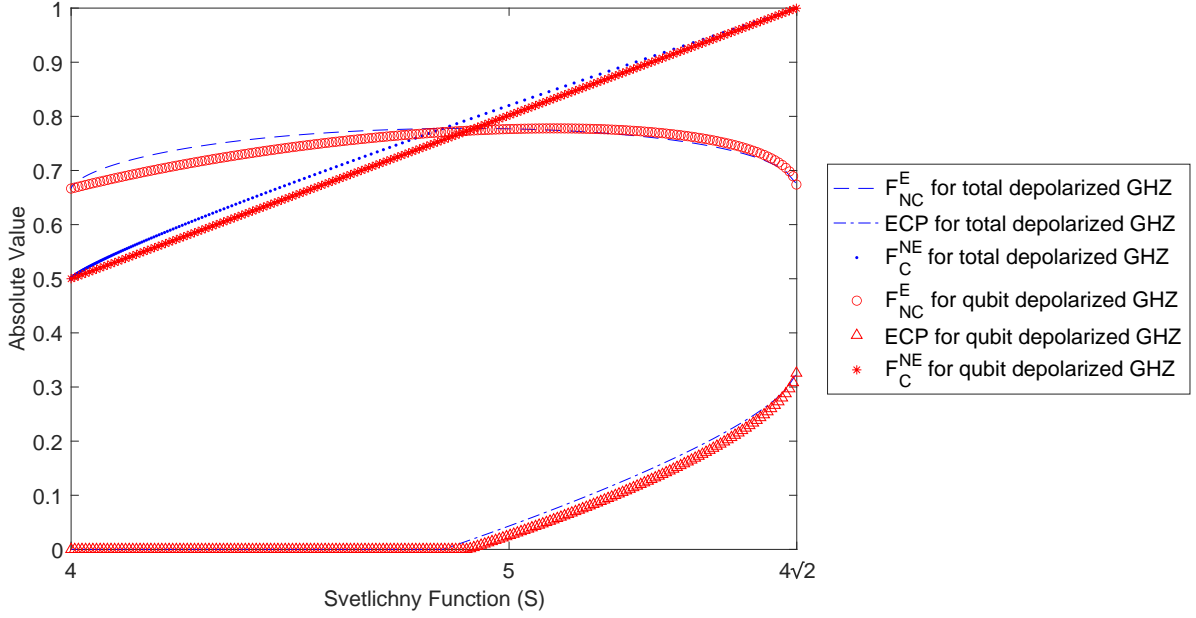}
     \caption{ Effective Control Power (ECP), average fidelity of teleportation with controller's permission ($F^{NE}_C$) and average fidelity of teleportation without controller's permission but with eavesdropper's participation ($F^E_{NC}$) as a function of maximum Svetlichny inequality violation (S) given by Eq. (\ref{max Svetlichny})  for both the qubit depolarized and total depolarized GHZ states with parameter $p \in (0,1)$ }.
\end{figure*}
To compute $F_{NC}^{E}$, we first need to determine the optimal POVMs of Derek which will execute the adversarial strategy (Section \ref{derek strategy}). \par
$\rho^{ABD}$ can be obtained from $\rho^{total/qubit}_f$ by first purifying it into $\ket{\psi_f^{total/qubit}}$ [Eq. (\ref{purification})] and then tracing out Charlie's system.
\begin{equation}
    \rho^{ABD}=\Tr^C\left(\ket{\psi_f^{total/qubit}}\bra{\psi_f^{total/qubit}}\right).
\end{equation}
By substituting $\rho^{ABD}$ in the expression  $\tilde{\rho}_i=\Tr^D((\ket{\phi_i}\bra{\phi_i}\otimes\mathbb{I}) \rho^{ABD})$, 
one can set up the optimization given in Eq. (\ref{sdp}).\par
Derek can then use the optimal POVMs $\{M_i\}_i$ to measure his system and reveal $\delta=i \in\{(+1)0,(+1)1,(-1)0,(-1)1\}$ to Bob. Hence, Alice and Bob's joint state ($\rho_i^{AB}$) conditioned on the result of Derek's measurement is given by:
\begin{equation}
    \rho_i^{AB}=\frac{\Tr^D\left(\rho^{ABD}( \mathbb{I}_2\otimes\mathbb{I}_2\otimes M_i)\right)}{\Tr\left(\rho^{ABD}( \mathbb{I}_2\otimes\mathbb{I}_2\otimes M_i)\right)}.
\end{equation}

Now Alice performs the Bell measurement $\{M_{s_0s_1}^{aA}\}_{s_0s_1}$ on the state to be teleported and her share of the composite state  $\rho_i^{AB}$. After Alice reveals her measurement outcome, Bob's qubit is prepared in the following state: 
\begin{equation}
    \rho^B_{s_0s_1i}=\frac{\Tr^{aA}\left(\left(\frac{\mathbb{I}+\Vec{a}.\Vec{\sigma}}{2}\otimes \rho_i^{AB}\right) \left(M_{s_0s_1}^{aA} \otimes\mathbb{I}\right)\right)}
    {\Tr\left(\left(\frac{\mathbb{I}+\Vec{a}.\Vec{\sigma}}{2}\otimes \rho_i^{AB}\right) \left(M_{s_0s_1}^{aA} \otimes\mathbb{I}\right)\right)}.
\end{equation}
Finally, the average fidelity of teleportation without Charlie's permission but with Derek's assistance ($F^{E}_{NC}$) can be computed. 
\begin{widetext}
\begin{equation}
    F^{E}_{NC}=\int\frac{d\Vec{a}}{4\pi}\sum_{s_0,s_1\in\{0,1\}; \ i\in \{0,1\}^2}P( s_0s_1 i)\bra{a} R^{-1}_{s_0s_1i}\rho^B_{s_0s_1i}(R^{-1}_{s_0s_1i})^{\dagger}\ket{a}.
\end{equation}
\end{widetext}

\subsubsection{Svetlichny Inequality Violation for Depolarized GHZ States}
We use the physical interpretation of the total depolarized GHZ state (TDGHZ) and the qubit depolarized GHZ state (QDGHZ) to derive the maximum Svetlichny inequality violation that can be obtained for a given depolarizing channel parameter. \par
As mentioned earlier, the total depolarized GHZ state can be seen as a probabilistic mixture of the perfect GHZ state $(\frac{\ket{000}+\ket{111}}{\sqrt{2}})$ and the completely mixed three qubit state ($\frac{\mathbb{I}}{8}$).  \par

The qubit depolarizing channel can be described by the process in which each qubit can get replaced by the completely mixed single qubit state ($\frac{\mathbb{I}}{2}$), with probability $p$. For the GHZ state, it means that with probability $(1-p)^3$ the state is unaltered; with probability $3p(1-p)^2$, the GHZ state is transformed into a bipartite entangled state  and with probability  $3p^2(1-p)+p^3$, it is transformed into a separable state. \par

It was shown in Ref. \cite{svetlichny} that tripartite entangled states are required for violating the Svetlichny Inequality $(S\leq 4)$. Bipartite entangled states and separable states do not violate this inequality while the tripartite entangled perfect GHZ state ($\frac{\ket{000}+\ket{111}}{\sqrt{2}}$) violates it maximally ($S=4\sqrt{2}$). Moreover, according to Theorem \ref{claim} it is possible to achieve the $(S = 4)$ bound using classical strategies. Therefore, the maximum Svetlichny inequality violation is given by the following equations:

\begin{align}\label{max Svetlichny}
    S_{TDGHZ}&=(1-p)S_{GHZ}+pS_{Classical}\\&=(1-p)4\sqrt{2}+4p.\\
    S_{QDGHZ}&=(1-p)^3S_{GHZ}+(1-(1-p)^3)S_{Classical}\\&=(1-p)^3 4\sqrt{2}+4(1-(1-p)^3).
    \end{align}
\subsection{Numerical Calculation of the Relationship Between Effective Control Power and Svetlichny Violation}


For a given depolarizing parameter $p\in (0,1)$, the maximum Svetlichny inequality violation was computed using Eq. (\ref{max Svetlichny}). The Effective Control Power (ECP) was computed using the method described in Section \ref{ECP calculation} and plotted against the maximum Svetlichny violation corresponding to the given value of parameter $p$. 
It is interesting to note that ECP is positive only when $S>4.84$ for the total depolarized GHZ state and when $S>4.90$ for the qubit depolarized GHZ state. The plots clearly show that ECP is a monotonically increasing function of $S$. The highest value of ECP is reached at the maximal violation of Svetlichny's inequality which confirms our Claim \ref{max authority}. It is important to note that the violation of the Svetlichny inequality [Eq. (\ref{broadcasting})] does not necessarily imply that Charlie has positive control power. For non-maximal violation, there is a small window in the range $4.84<S<4\sqrt{2}$ for the qubit depolarized GHZ states and $4.90<S<4\sqrt{2}$ for the total depolarized GHZ states, where ECP is positive, and hence Charlie maintains some level of control.

\section{Summary and Outlook}\label{summary}
In this paper, we have performed a device independent study of controlled teleportation of a qubit with an untrusted receiver. We constructed a device independently testable scenario in a way that allowed us to certify in the context of controlled teleportation, whether quantum resources were being used by the device despite the receiver being untrusted. We found in this case that the well-known Svetlichny inequality must be violated to certify quantum correlations. A maximal violation of the Svetlichny inequality guarantees maximum control power. This is in contrast to the controlled teleportation with all trusted parties where the maximal violation of Mermin's inequality was sufficient to certify maximum control power. This indicates that a stronger form of nonlocality, also known as `genuine tripartite nonlocality', is required to device independently test the controlled quantum teleportation with an untrusted receiver. Until recently \cite{DIsecret}, there was no application of higher order Bell's inequalities in DI quantum cryptography. Our work demonstrates one of the first instances of the usefulness of stronger forms of Bell's inequalities (Svetlichny inequality in this case) in DI quantum communication protocols.   \par
We proposed an adversarial strategy which, while not proven to be optimal, can effectively decrease the controller's authority by taking advantage of a non-ideal device that non-maximally violates Svetlichny inequality. By taking the example of two families of quantum states characterized by the total depolarized and the qubit depolarized GHZ states, we showed that the controller's authority is a monotonically increasing function of the maximal Svetlichny inequality violation. For the given family of depolarized GHZ states, adversarial strategy and a Svetlichny inequality violation, one can infer the controller's authority from our numerically obtained plot. We found a window of non-maximal Svetlichny inequality violation where the controller's authority is non-zero. This shows that the controlled teleportation scheme with an untrusted receiver is robust to depolarizing noise present in the device. This result can be a guiding framework for practical implementations of DI CQT.\par
In this paper we have used effective control power as the figure of merit in DI CQT, whereas the authors of  \cite{DIsecret} used $(1-p_{guess})$ (probability of the untrusted receiver guessing the wrong secret bit) as the figure of merit in the DI secret sharing of a bit.  However, we see that in both cases, the relationship of these figures of merit with that of genuine tripartite nonlocality is very similar. From this observation, intuition suggests that the von Neumann entropy of the quantum information revealed to the untrusted receiver would show similar trends to that of effective control power in DI CQT. The translation from effective control power to an entropic measure of quantum information would help to  generalize our work to a multipartite QSS protocol. \par 
Since CQT forms the basis of quantum teleportation networks, the techniques developed for DI analysis in this paper provide a stepping stone for performing DI analysis of quantum networks with untrusted nodes and practical implementation of a quantum internet.

\begin{acknowledgments}
We thank N. L\"{u}tkenhaus, R. Mann for useful discussions. This research was funded by the Natural Sciences and Engineering Research Council
of Canada and NXM Labs Inc. NXM’s autonomous security technology enables devices, including
connected vehicles, to communicate securely with each other and their surroundings without human intervention
while leveraging data at the edge to provide business intelligence and insights. NXM ensures data
privacy and integrity by using a novel blockchain-based architecture which enables rapid and regulatorycompliant
data monetization. Toronto Metropolitan University is in the “Dish With One Spoon Territory.” The Dish
With One Spoon is a treaty between the Anishinaabe, Mississaugas and Haudenosaunee that bound them
to share the territory and protect the land. Subsequent Indigenous Nations and peoples, Europeans and all
newcomers, have been invited into this treaty in the spirit of peace, friendship and respect. Wilfrid Laurier
University and the University of Waterloo are located on the traditional territory of the Neutral, Anishnawbe and Haudenosaunee peoples. The University of Waterloo is situated on the Haldimand Tract, the land promised to the Six Nations that includes six miles on each side of the Grand River.
We thank them for allowing us to conduct research on their land.

\end{acknowledgments}

\appendix

\section{SDP Optimization}\label{AppendixA}
Consider the optimization program given in Eq. (\ref{sdp}). We can cast it into a semidefinite program in the following way:\\
Let $\mathcal{X}$ be a Hilbert space of dimension $D$, where $D$ is also the dimension of Derek's quantum system. Define $\Phi$ as a Hermitian preserving map $T(\mathcal{X},\mathcal{X})$. Then the optimization problem is given by:
\begin{equation}
\begin{aligned}
&\underline{\text { Primal problem }} \\
&\text{ maximize:}  \ \ \langle (\tilde{\rho_1},\tilde{\rho_2},...,\tilde{\rho_n}), (M_1,M_2,...,M_n)\rangle\\
&\text{subject to:} \ \  \Phi (M_1,M_2,...,M_n)\triangleq \sum_{i=1}^n{M_i}=\mathbb{I}_D\\
&(M_1,M_2,...,M_n)\in \operatorname{Pos}(\mathcal{X})^n\\
\end{aligned}
\end{equation}

\begin{equation}
\begin{aligned}
&\underline{\text { Dual problem }} \\
&\text { minimize: } \ \  \langle \mathbb{I}_D, Y\rangle\\
&\text{subject to:} \ \ \Phi^{\dagger} (Y)\succeq (\tilde{\rho_1},\tilde{\rho_2},...,\tilde{\rho_n})\\
& Y\in  \operatorname{Herm}(\mathcal{X})
\end{aligned}
\end{equation}


\providecommand{\noopsort}[1]{}\providecommand{\singleletter}[1]{#1}%

\end{document}